\newtheorem{theorem}{Theorem}
\newtheorem{corollary}{Corollary}
\newcommand{\RomanNumeralCaps}[1]{\MakeUppercase{\romannumeral #1}}
\newcommand\overstar[1]{\ThisStyle{\ensurestackMath{%
			\setbox0=\hbox{$\SavedStyle#1$}%
			\stackengine{0pt}{\copy0}{\kern.2\ht0\smash{\SavedStyle*}}{O}{c}{F}{T}{S}}}}
\begin{document}

\title{Probabilistic Voltage Sensitivity Analysis (PVSA) to Quantify Impact of High PV Penetration on Unbalanced Distribution System}
\author{\IEEEauthorblockN{Sai Munikoti,~\textit{Student Member, IEEE},  Balasubramaniam Natarajan,~\textit{Senior Member, IEEE}, \\ Kumarsinh Jhala,~\textit{Member, IEEE}, Kexing Lai,~\textit{Member, IEEE}
		\thanks{K. Jhala is with the Center for Energy, Environmental, and Economic Systems Analysis in the Energy Systems Division at Argonne National Laboratory (e-mail: kjhala@anl.gov).}
		\thanks{S. Munikoti, K. Lai and B. Natarajan are with Electrical and Computer Engineering, Kansas State University, Manhattan, KS-66506, USA, (e-mail: saimunikoti@ksu.edu, klai@ksu.edu, bala@ksu.edu)}
		\thanks{This work has been submitted to the IEEE for possible publication. Copyright may be transferred without notice, after which this version may no longer be accessible.}}}

\markboth{Preprint submitted to IEEE Transactions on Power Systems}%
{Sai \MakeLowercase{\textit{et al.}}: Bare Demo of IEEEtran.cls for IEEE Journals}

\maketitle

\begin{abstract}
From operational and planning perspective, it is important to quantify the impact of increasing penetration of photovoltaics on the distribution system. Most existing impact assessment studies are scenario based where derived results are scenario specific and not generalizable. Moreover, stochasticity in temporal behavior of spatially distributed PVs requires large number of scenarios that increases with the size of the network and the level of penetration. Therefore, we propose a new computationally efficient analytical framework of voltage sensitivity analysis that allows for stochastic analysis of voltage change due to random changes in PV generation. We first derive an analytical approximation for voltage change at any node of the network due to change in power at other nodes in an unbalanced distribution network. Quality of this approximation is reinforced via bounds on the approximation error. Then, we derive the probability distribution of voltage change at a certain node due to random changes in power injections/consumptions at multiple locations of the network. The accuracy of the proposed PVSA is illustrated using a modified version of IEEE 37 bus test system. The proposed PVSA can serve as a powerful tool for proactive monitoring/control and ease the computational burden associated with perturbation based cybersecurity mechanisms.
\end{abstract}

\begin{IEEEkeywords}
Impact analysis, PV injection, Probability, Power Distribution, Sensitivity, Voltage violations.   
\end{IEEEkeywords}

%
\IEEEpeerreviewmaketitle
\vspace{-0.3cm}
\section{Introduction}

\IEEEPARstart{T}{he} power grid is undergoing significant changes with the integration of renewable energy resources, electric vehicles and active consumers. Massive deployments of rooftop photovoltaic (PV)
generation and demand response programs to incentivize consumers for  peak load shaving are emerging across communities around the world. Despite a variety of benefits, high PV penetration imposes significant challenges on control and operation of distribution systems, including ($1$) voltage stability affected by the increase in underlying uncertainty due to intermittent power characteristics; ($2$) complexity of the system associated with bidirectional power flow, and ($3$) unbalanced characteristics due to variable number and size of PV installations on the three phases \cite{malekpour2015radial}. One approach to improve control and management is to leverage the information aggregated from sensors and devices at the grid edge. This information along with classical load flow algorithms are used to implement various control strategies \cite{aghatehrani2012reactive,weckx2014voltage}. However, such proactive control strategies rely on cumbersome computation of sensitivity matrices, which need to be recomputed whenever the state changes and do not incorporate the spatio-temporal stochasticity of the sources. Additionally, there are multiple recent efforts on enhancing cyber security of the grid edge devices that rely on moving target paradigms. The moving target detection (MTD) approaches involve perturbing the system (e.g. changing $P,Q$ set points of inverter) and observing the response of the system to identify malicious actions \cite{rahman2014moving}, \cite{liu2018hidden},\cite{zhang2019analysis}. These MTD strategies rely on multiple computationally cumbersome load flow simulations to quantify expected behavior. Both proactive control and practical implementation of MTD strategies demand a sensitivity analysis method that is computationally efficient, scalable and can incorporate uncertainties.

\textbf{Related work}: Voltage sensitivity analysis (VSA), which quantifies the voltage variation at a given node due to power changes at other locations of the network, can be used as an effective tool to quantify the impacts of PV variations and MTD related intentional perturbations on the voltage stability across the network. Methodologies for VSA can be broadly divided into two categories, i.e., numerical and analytical. Numerical VSA methods rely on algorithms to give approximate solutions such as Newton-Raphson (NR) load flow method and perturb-and-observe method, which suffer from high computational cost and lack of insights on the system states. Many prior research efforts have examined the performance of numerical sensitivity analysis methods as it relates to regulating voltage in a power system with distributed generators (DGs)\cite{aghatehrani2011reactive, aghatehrani2012reactive, yan2012voltage}, \deleted{\cite{weckx2014voltage}}, \cite{valverde2013model, samadi2014coordinated}, and its drawback in terms of computational efficiency is repeatedly unveiled in these literature. \added{For instance, authors in \cite{aghatehrani2012reactive} present a reactive power control method based on voltage sensitivity analysis for mitigating voltage variations in PV integrated distribution systems. Specifically, a new set point for reactive power is computed with varied active power injection/consumption at other nodes, using Newton-Raphson method for load flow calculation.}\deleted{In \cite{aghatehrani2012reactive,weckx2014voltage}, the Newton-Raphson method is used to control the voltage fluctuation in a PV integrated system.}\added{ \cite{yan2012voltage} proposes a method for analyzing voltage variations due to PV generation fluctuations in unbalanced distribution grids, considering a variety of factors. 
However, its dependency on the inefficient simulation method limits its applications in large scale distribution networks.
In \cite{valverde2013model}, a model predictive control method is proposed to coordinate the active and reactive power of DGs and on-load tap changing transformers set-points for voltage regulation. However, the online update of sensitivity matrix of bus voltages is not realistic using the proposed method due to the high computational burden.} \deleted{In \cite{valverde2013model}, a model predictive controller is used along with the sensitivity matrix to regulate voltages.} 
Authors in \cite{newaz2019coordinated} proposed a centralized coordinated voltage control algorithm for distribution systems with (DGs). Here, Newton-Raphson method is used to examine DG's effect on the voltage stability of a certain node due to reactive power injection at different nodes across the network. \cite{kang2019reactive} proposes a new reactive power management method for minimizing voltage variation \added{in both steady state and transient conditions} due to DER integration. Here, the reactive power of each DER is controlled \added{by exploiting the numerical relationship between variations of voltage and reactive power, based on the traditional VSA method.}
Author in \cite{huang2019day}, \added{develops an optimization model for electric vehicle management based on VSA approaches. Still, the requirements of iterative executions of power flow calculations and optimization models hinder its application in real-world scenarios.
}\deleted{used voltage sensitivity approaches along with power flow to minimize battery degradation in charging/discharging and maximize peak load shifting}\added{Further, an active distribution network management approach is proposed in \cite{ding2016distributed, pezeshki2017probabilistic} for maximization of PV hosting capacity. The approach involves adjusting switching capacitors and voltage regulator taps. In this case, thousands of scenarios are incorporated to address the uncertainties that reveal the huge computational burden of VSA in the presence of renewable energy resources. 
To summarize,} most of these numerical approaches involve computationally expensive load flow algorithms \added{or some kind of trade-off that negatively impacts performance}, thereby limiting their applicability in large scale distribution systems with uncertainties \cite{afolabi2015analysis}.

To overcome the drawbacks of numerical methods, there are some limited analytical approaches for VSA that have been proposed. In \cite{brenna2010voltage}, a new \added{sensitivity matrix is derived analytically, relating voltage magnitude with reactive power change. Then, the sensitivity product is maximized to obtain the optimal generator that has the greatest influence on the voltage of the critical node.} Similarly, in \cite{zad2016centralized}, \added{an algorithm based on the sensitivity analysis
has been designed which optimally manages active and reactive powers of
DGs in order to keep the system voltages inside the limits. Here, instead of repeating load flow calculation to solve the optimization problem, a sensitivity matrix is used to conduct load flow computation in a non-iterative manner, reducing the computational burden significantly. However, the algorithms proposed both in \cite{brenna2010voltage} and \cite{zad2016centralized} are not properly validated with standard test systems.} 
\deleted{the complex power of DGs is controlled to keep the voltage within safe limits, but the proposed VSA method in this work is not validated with any standard test system.}Authors in \cite{klonari2016application}, have taken a probabilistic approach where smart meter data is used along with sensitivity analysis to define boundary values of various operation indices. Here, the real and reactive power consumption of houses are assumed to be independent which is not the case in reality and the proposed approach doesn't account for unbalanced load conditions. In \cite{mugnier2016model}, authors have computed voltage sensitivities by formulating an over-determined system of linear equations constructed solely using measurements of nodal power injections and voltage magnitudes. Similarly, \cite{valverde2018estimation} uses smart meter data with a linear regression model for predicting the voltage change but both \cite{valverde2018estimation, mugnier2016model} rely on the availability of data and monitoring infrastructure. \added{Authors in \cite{weckx2014voltage} obtain load dependent voltage sensitivity factors and develop linearized load flow model based on historical smart meter data comprising of load and voltage profiles, without leveraging any  grid topology information. This work relies heavily on the availability of smart meter data at customer level and data needs to be recollected whenever the network gets reconfigured.} In a nutshell, existing analytical approaches are not generalized enough for analysis of large scale unbalanced distribution systems with stochastic behavior. Therefore, in our prior work \cite{jhala2017probabilistic}, an analytical bound for voltage sensitivity is derived for single phase balanced distribution network. Building off our preliminary work, in this paper, we propose an analytical VSA for a general case of three phase unbalanced distribution system where stochastic power fluctuations can simultaneously occur at multiple nodes of the network. This extension presents many challenges as power change in any one phase impacts the voltage in all the phases.\\

\textbf{Contributions}:
The analytical VSA strategy proposed in this paper not only addresses the computational shortcomings of numerical approaches but systematically incorporates uncertainties. The key contributions of our work include :
\begin{itemize}[leftmargin=*]	
\item An analytical approximation of voltage change due to power change at multiple nodes in an unbalanced distribution network is derived in Section \RomanNumeralCaps{2} (Corollary 1).
\item We derive an upper bound on the approximation error associated with the analytical approximation to further demonstrate its accuracy in Section \RomanNumeralCaps{3} (Corollary 2).
\item To systematically incorporate the stochasticity of power variations, the theoretical probability distribution of voltage change due to random power changes at multiple actor nodes is derived in Section \RomanNumeralCaps{4} (Theorem 2). The resulting PVSA can enable proactive voltage monitoring for identifying voltage violations, due to fluctuations in PV generation \cite{jhala2019data}.  
\item The computational complexity of the proposed method is $O(1)$, i.e., pretty much constant time for execution regardless of network size. Classical NR method has a complexity of $O(n^{3})$, i.e., the execution time scales cubically with the size of the network. 
\end{itemize}
\vspace{-0.10cm}
\section{Analytical approximation of VSA}
This section introduces an analytical approach to VSA for three phase unbalanced power distribution system. Changes in real or reactive power at any phase of a bus results in voltage changes at all phases across all nodes of the distribution system. Nodes where power changes are referred to as actor nodes ($A$), and the nodes where voltage change is monitored are referred to as observation nodes ($O$). This work assumes that the source bus is a slack bus and the \added{load is modeled as constant power load} with star configuration, which serves as an example for illustration. In our preliminary work \cite{8973956}, we derive an analytical approximation for voltage change at an observation node due to the power change at an actor node. The main result is stated in Theorem 1.
\begin{figure}[h!]
	\centering
	\includegraphics[width = 8.9cm,height=4.5cm]{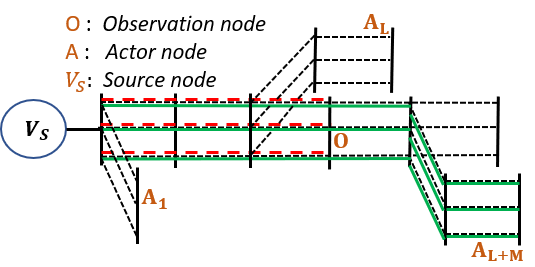}
	\caption{Example network with multiple actor nodes}
	\label{fig:1}
\end{figure}
\begin{theorem}
	For an unbalanced power distribution system, change in complex voltage ($\Delta V_{OA}$) at an observation node ($O$) due to change in complex power of an actor node ($A$) can be approximated by 
	\begin{equation}	
	\begin{bmatrix}
	\Delta V_{O}^{a} \\[13pt]
	\Delta V_{O}^{b} \\[13pt]
	\Delta V_{O}^{c} 
	\end{bmatrix} \approx -	
	\begin{bmatrix}
\frac{\Delta S_{A}^{a\star}Z_{OA}^{aa}}{ V_{A}^{a\star}} + \frac{\Delta S_{A}^{ b\star}Z_{OA}^{ab}}{ V_{A}^{b\star}}+ \frac{\Delta S_{A}^{ c\star}Z_{OA}^{ac}}{V_{A}^{c\star}} \\[7pt]
\frac{\Delta S_{A}^{a\star}Z_{OA}^{ba}}{ V_{A}^{a\star}} + \frac{\Delta S_{A}^{b\star}Z_{OA}^{bb}}{ V_{A}^{b\star}}+ \frac{\Delta S_{A }^{c\star}Z_{OA}^{bc}}{ V_{A}^{c\star}}  \\[7pt]
\frac{\Delta S_{A}^{ a\star}Z_{OA}^{ca}}{ V_{A}^{a\star}} + \frac{\Delta S_{A}^{b\star}Z_{OA}^{cb}}{ V_{A}^{b\star}}+ \frac{\Delta S_{A}^{ c\star}Z_{OA}^{cc}}{ V_{A}^{c\star}}
\end{bmatrix}
	\label{eq:1a} 
	\end{equation}
	where $a,b$ and $c$ represent the three phases, which are used throughout the paper. $V_{A}^{a\star}$ and $\Delta S_{A}^{a}$ represent complex conjugate of voltage at phase $a$ and complex power change at actor node $A$, respectively; $Z$ denotes the impedance matrix including self and mutual line impedance of the shared path between observation node and actor node from the source node. Fig.\ref{fig:1} depicts an unbalanced three phase distribution system which will be used for illustrations of Theorems. The red lines in Fig. \ref{fig:1} represent the shared paths between actor node $A_{L+M}$ and observation node $O$, from the source node.
\end{theorem}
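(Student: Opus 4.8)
The plan is to derive \eqref{eq:1a} from two elementary relations of three-phase distribution networks: the per-phase power--current identity at the actor node and the coupled voltage-drop law along a line segment. First I would write, for each phase $\phi\in\{a,b,c\}$, the constant-power-load relation $I_{A}^{\phi}=\bigl(S_{A}^{\phi}/V_{A}^{\phi}\bigr)^{\star}=S_{A}^{\phi\star}/V_{A}^{\phi\star}$, so that the current drawn at $A$ is fixed by the power and voltage there. For a three-phase line segment carrying current $I_{\mathrm{line}}$, the voltage drop obeys $\Delta V_{\mathrm{line}}=Z_{\mathrm{line}}\,I_{\mathrm{line}}$, where $Z_{\mathrm{line}}$ is the $3\times3$ matrix of self and mutual impedances; this off-diagonal coupling is precisely the mechanism by which a power change on one phase perturbs the voltage on all three phases, which is the new feature relative to the balanced case of \cite{jhala2017probabilistic}.

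Next I would exploit the radial structure. Because the path from the source to each node is unique, the voltage at the observation node equals the slack-bus voltage minus the accumulated $Z_{\mathrm{line}}I_{\mathrm{line}}$ drops along the path to $O$. By KCL, the current contributed by the load at $A$ returns toward the source only through the lines common to the path-to-$A$ and the path-to-$O$, i.e.\ the shared path drawn in red in Fig.~\ref{fig:1}; on any line of the path-to-$O$ that lies beyond the point of divergence, the current from $A$ does not flow, so a perturbation $\Delta S_{A}$ changes the line current only on the shared path. Summing the segment impedance matrices over that path defines the cumulative matrix $Z_{OA}$, and one obtains the compact relation $\Delta V_{O}=-\,Z_{OA}\,\Delta I_{A}$, where $\Delta V_{O}$ and $\Delta I_{A}$ are $3\times1$ phase vectors and the minus sign records that extra load current depresses the voltage.

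The remaining step converts the current perturbation into a power perturbation. Differentiating the power--current identity gives $\Delta I_{A}^{\phi}=\Delta\!\bigl(S_{A}^{\phi\star}/V_{A}^{\phi\star}\bigr)$; treating the voltage at $A$ as essentially unchanged under the perturbation, I would approximate $\Delta I_{A}^{\phi}\approx\Delta S_{A}^{\phi\star}/V_{A}^{\phi\star}$. Substituting into $\Delta V_{O}=-Z_{OA}\,\Delta I_{A}$ and expanding the matrix--vector product row by row reproduces \eqref{eq:1a} exactly, with the $\psi$-th component equal to $-\sum_{\phi}Z_{OA}^{\psi\phi}\,\Delta S_{A}^{\phi\star}/V_{A}^{\phi\star}$.

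I expect the main obstacle to be justifying the linearization $\Delta I_{A}^{\phi}\approx\Delta S_{A}^{\phi\star}/V_{A}^{\phi\star}$: the exact differential also carries a term proportional to the voltage response $\Delta V_{A}^{\phi}$ at the actor node, together with the second-order product $\Delta S_{A}\,\Delta V_{A}$, and this approximation discards both. Arguing that the neglected contribution is small relative to the retained term under near-nominal operation is the delicate point, and quantifying it is exactly the role of the error bound established later in Corollary~2.
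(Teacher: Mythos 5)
Your proposal is correct and follows essentially the same route as the paper: the paper (in its proof of Corollary~1, which subsumes Theorem~1, the latter's proof being deferred to \cite{8973956}) likewise writes $V_O$ as the slack voltage minus accumulated $Z_e I_e$ drops, expresses branch currents as $S^{\star}/V^{\star}$, restricts the perturbation's effect to the shared-path edges whose impedances sum to $Z_{OA}$, and neglects the $\Delta V_A/(V_A+\Delta V_A)$ terms to obtain the linearized form. You also correctly identify that the discarded actor-node voltage response is the source of the approximation error quantified in Corollary~2.
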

\begin{proof}
	For brevity, the proof is omitted here and the readers are directed to \cite{8973956} for details.
\end{proof}
\subsection{Multiple actor nodes}
With increasing penetration of DERs at different locations across the grid, it is important to extend the VSA to include multiple actor nodes, resulting in Corollary 1:
\begin{corollary}
	For an unbalanced power distribution system, change in complex voltage $\Delta V_{O}$ at an observation node ($O$) due to change in complex power at multiple actor nodes can be approximated by 
	\begin{equation}	
	\begin{bmatrix}
	\Delta V_{O}^{a} \\[15pt]
	\Delta V_{O}^{b} \\[15pt]
	\Delta V_{O}^{c} 
	\end{bmatrix} \approx- \sum_{A \epsilon \tilde{A}}  \left(	
	\begin{bmatrix}
\frac{\Delta S_{A}^{a\star}Z_{OA}^{aa}}{ V_{A}^{a\star}} + \frac{\Delta S_{A}^{ b\star}Z_{OA}^{ab}}{ V_{A}^{b\star}}+ \frac{\Delta S_{A}^{ c \star}Z_{OA}^{ac}}{ V_{A}^{c\star}} \\[8pt]
\frac{\Delta S_{A}^{ a\star}Z_{OA}^{ba}}{ V_{A}^{a\star}} + \frac{\Delta S_{A}^{ b\star}Z_{OA}^{bb}}{ V_{A}^{b\star}} + \frac{\Delta S_{A }^{c\star}Z_{OA}^{bc}}{ V_{A}^{c\star}}  \\[8pt]
\frac{\Delta S_{A}^{ a\star}Z_{OA}^{ca}}{ V_{A}^{a\star}} + \frac{\Delta S_{A}^{ b\star}Z_{OA}^{cb}}{ V_{A}^{b\star}}+ \frac{\Delta S_{A}^{ c\star}Z_{OA}^{cc}}{ V_{A}^{c\star}}
\end{bmatrix}\right)
\label{eq:2} 
\end{equation}
	where $\tilde{A}$ is the set of all actor nodes.
\end{corollary}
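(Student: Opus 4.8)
The plan is to argue by \emph{superposition}, leveraging the same linearity that underlies the single–actor approximation of Theorem~1. First I would return to the physical origin of \eqref{eq:1a}: the voltage deviation at $O$ is produced by the change in the currents flowing through the line segments that the source-to-$O$ path shares with the source-to-$A$ path, and by Kirchhoff's current law the current on any such shared segment is the aggregate of the current contributions destined to all downstream nodes. For a single actor node, the current injected because of a power perturbation is approximated (holding the nodal voltage at its nominal value) by $\Delta I_A^{\phi}\approx \Delta S_A^{\phi\star}/V_A^{\phi\star}$, and the resulting voltage change is the product of this current vector with the shared–path impedance matrix $Z_{OA}$, which is exactly the summand appearing in \eqref{eq:2}.

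Next I would let power perturbations occur \emph{simultaneously} at every node in the actor set $\tilde{A}$. Because the current contributions add on each shared segment, and because the voltage drop along a path is a \emph{linear} map of those currents through the impedance matrix, the total voltage change at $O$ is the sum of the per-segment contributions. Re-grouping this sum by actor node, and recognizing that the contribution attributable to each $A$ is precisely the right-hand side of Theorem~1 with its own shared–path impedance $Z_{OA}$ and nominal voltage $V_A^{\phi\star}$, yields \eqref{eq:2} directly.

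The step that requires care — and that I expect to be the main obstacle — is justifying that the individual contributions genuinely superpose rather than interact. The quantities $V_A^{\phi\star}$ in the denominators are themselves functions of the power injections at \emph{all} actor nodes, so a perturbation at one node $A'$ perturbs the voltage (and hence the drawn current) at another node $A$, producing cross terms proportional to products $\Delta S_{A'}\,\Delta S_{A}$. I would control these with the same first-order argument already invoked in Theorem~1: each nodal voltage is frozen at its nominal operating value when its injected current is computed, so the cross–coupling terms are second order in the (small) power perturbations and are discarded. The approximation is therefore exact to first order in $\{\Delta S_A\}_{A\in\tilde{A}}$, and \eqref{eq:2} follows; the magnitude of the neglected higher-order terms is exactly what the error bound of Corollary~2 is intended to quantify.
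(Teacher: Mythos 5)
Your overall route is the same as the paper's: write $V_O$ as the source voltage minus the impedance-weighted sum of edge currents on the source-to-$O$ path, note that each edge current is the aggregate of the load currents of all downstream nodes, perturb, freeze the nodal voltages at their nominal values, and regroup the edge sum by actor node so that each $A$ contributes $Z_{OA}I_A$ with $Z_{OA}$ the shared-path impedance (the paper does this bookkeeping explicitly via the sets $E_O\cap E_{A_l}$ in (\ref{eq:7})--(\ref{eq:10})). So the structure of your argument is sound and matches the paper.

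There is, however, one inaccuracy in your error accounting that you should fix. You claim the neglected interaction terms are all of the form $\Delta S_{A'}\,\Delta S_A$ and hence second order, so that the approximation is ``exact to first order in $\{\Delta S_A\}$.'' That is not true. When the perturbation is carried out exactly, the change in the current drawn at a node $n$ is $\frac{(S_n+\Delta S_n)^{\star}}{(V_n+\Delta V_n)^{\star}}-\frac{S_n^{\star}}{V_n^{\star}}$, whose first-order expansion contains, in addition to the retained term $\Delta S_n^{\star}/V_n^{\star}$, the term $-S_n^{\star}\Delta V_n^{\star}/(V_n^{\star})^{2}$: the pre-existing constant-power load at $n$ redraws current when its voltage moves. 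Since $\Delta V_n$ is itself first order in the power perturbations, this term is \emph{first} order, not second, and it arises at every loaded node downstream of a shared edge --- including non-actor nodes, where $\Delta S_n=0$ but $\Delta V_n\neq 0$. This is exactly the $S_n^{\star}\Delta V_n^{\star}$ term visible in the numerator of (\ref{eq:5}), and the paper discards it not by order-counting in $\Delta S$ but by the separate physical assumption $\Delta V_n^{\star}/(V_n^{\star}+\Delta V_n^{\star})\to 0$, i.e., that the base-load current times the \emph{relative} voltage deviation is negligible against the injected current change because voltage deviations in a distribution feeder are small compared to nominal. Your conclusion (the corollary) survives, but you need to replace the ``second order, hence negligible'' justification with this stiffness assumption; as stated, your argument overstates the accuracy of the approximation.
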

\begin{proof}
Voltage at an observation node can be written in terms of source voltage and voltage drop across the lines (edges) between source node and observation node as
\begin{equation}
\pmb{V_{O}} = \pmb{V_{S}} - \sum_{e\epsilon E_{o}} \pmb{Z_{e}I_{e}}.
\label{eq:3}
\end{equation}
\hspace{1cm}
$\pmb{V_{O}}=\begin{bmatrix}
V_{O}^{a} \\[3pt]
V_{O}^{b} \\[3pt]
V_{O}^{c} 
\end{bmatrix}$ and  $\pmb{Z_{e}}=\begin{bmatrix}
Z_{e}^{aa} &Z_{e}^{ab} & Z_{e}^{ac} \\[3pt]
Z_{e}^{ba} &Z_{e}^{bb} & Z_{e}^{bc} \\[3pt]
Z_{e}^{ca} &Z_{e}^{cb} & Z_{e}^{cc} 
\end{bmatrix}$ \\ \\ 
where $\pmb{I_{e}}$ and $\pmb{Z_{e}}$ are the current vector and line impedance matrix for edge $e$, respectively. $E_{o}$ is set of all edges between the source node and observation node $O$. \added{It is to be noted that for three phase four wire distribution system, the line impedance matrix is a $4\times 4$ matrix, which accounts for a neutral conductor along with the conductors of three phases. Therefore, to incorporate this system in our framework, an equivalent $3\times3$ impedance matrix needs to be computed by using the Carson's method followed by Kron's reduction. The Kron's formula for each element $Z_{ij}$ of the impedance matrix $Z_{e}$ is given by \cite{kersting1994distribution},
\begin{equation}
Z_{ij}^{'}= Z_{ij}-\frac{Z_{in}Z_{nj}}{Z_{nn}}
\label{eq:3a}
\end{equation}
where, $Z_{in}$ and $Z_{nj}$ are the mutual impedance of the conductors at phase $i$ and $j$ (with respect to the neutral conductor), respectively. $Z_{nn}$ is the self impedance of the neutral conductor, and $Z_{ij}$ is the mutual impedance between phase $i$ and phase $j$.
Thus, (\ref{eq:3a}) can generate the equivalent line impedance matrix $Z_{e}$ for a four wire system, which can be plugged in (\ref{eq:3}) to compute the voltage at the observation node in a four-wire distribution network. Now, }let $S_{n}$ be the complex power drawn or injected at node $n$ and $V_{n}^{\star}$ be the complex conjugate of voltage at node $n$. The current flowing through a particular edge $e$ of $E_{o}$ can be written as
\begin{equation}
\pmb{I_{e}}=\begin{bmatrix}
I_{e}^{a} \hspace{0.1cm} I_{e}^{b} \hspace{0.1cm} I_{e}^{c}  
\end{bmatrix}^{T}= \sum_{n\epsilon N_{e}}\begin{bmatrix}
\frac{S_{n}^{a\star}}{V_{n}^{a\star} } \hspace{0.1cm} \frac{S_{n}^{b\star}}{V_{n}^{b\star} } \hspace{0.1cm} \frac{S_{n}^{c\star}}{ V_{n}^{c\star} }
\end{bmatrix}^{T},
\label{eq:3b}
\end{equation}
where $N_{e}$ is the set of all nodes $n$ for which edge $e$ is between node $n$ and the source node. In other words, power from the source node to all the nodes in the set $N_{e}$ flows through edge $e$. Therefore, current in edge $e$ will be affected by the power change at nodes $ n \epsilon N_{e}$. \deleted{It is important to note that the current is expressed in equation (\ref{eq:3b}) by employing $S=V^{*}I$ form, where current is assumed to be reference and voltage is measured with respect to it. More importantly, we are interested in the magnitude of voltage change which is invariant to both the representations of complex power $S$, i.e., $S=V^{*}I$ and $S=VI^{*}$. Now,} 
\deleted{Thus}Then, voltage at the observation node can be expressed as,
\begin{equation}
\pmb{ V_{O}} =\pmb{V_{S}} -\sum_{e\epsilon E_{o}} \pmb{Z_{e}} \sum_{n\epsilon N_{e}} 
\begin{bmatrix}
\frac{S_{n}^{a\star}}{V_{n}^{a\star} } \hspace{0.1cm} \frac{S_{n}^{b\star}}{V_{n}^{b\star} } \hspace{0.1cm} \frac{S_{n}^{c\star}}{ V_{n}^{c\star} }
\end{bmatrix}^{T}.
\label{eq:4} 
\end{equation}
When power consumption of node $n$ changes from $S_{n}$ to $S_{n}+ \Delta S_{n} $, the voltage of node $n$ will change from $V_{n}$ to $V_{n} + \Delta V_{n}$ and consequently voltage at observation node will change to $\pmb{V_{O}^{'}}$. The effective voltage change at observation node $\pmb{\Delta V_{O}}$ $(i.e.,\pmb{V_{O}-V_{O}^{'})}$ can then be written as: 

\begin{equation}
\begin{split}
\pmb{\Delta V_{O}} 
& =\sum_{e\epsilon E_{o}} \pmb{Z_{e}} \left( \sum_{n\epsilon N_{e}} \begin{bmatrix}
\frac{S_{n}^{a\star}\Delta V_{n}^{a\star}-\Delta S_{n}^{a\star} V_{n}^{a\star}}{ V_{n}^{a\star} ( V_{n}^{a\star} + \Delta V_{n}^{a\star} )}  \\[11pt]
\frac{S_{n}^{b\star}\Delta V_{n}^{b\star}-\Delta S_{n}^{b\star} V_{n}^{b\star}}{V_{n}^{b\star} ( V_{n}^{b\star} + \Delta V_{n}^{b\star} )}  \\[11pt]
\frac{S_{n}^{c\star}\Delta V_{n}^{c\star}-\Delta S_{n}^{c\star} V_{n}^{c\star}}{V_{n}^{c\star} ( V_{n}^{c\star} + \Delta V_{n}^{c\star})} 
\end{bmatrix} \right) 
\label{eq:5} 
\end{split}.
\end{equation}
In practice, voltage changes are typically small compared to actual node voltage. Hence, it is reasonable to assume that $\Delta V_{n}^{a\star}/(V_{n}^{a\star}+\Delta V_{n}^{a\star}) \to 0 $. Hence, (\ref{eq:5}) is approximated as,
\begin{equation}
\begin{split}
\pmb{\Delta V_{O}}
&=\sum_{e\epsilon E_{o}} \pmb{Z_{e}} \left( \sum_{n\epsilon N_{e}} \begin{bmatrix}
\frac{-\Delta S_{n}^{a\star} }{V_{n}^{a\star}  +  \Delta V_{n}^{a\star}}  \\[8pt]
\frac{-\Delta S_{n}^{b\star} }{V_{n}^{b\star}  + \Delta V_{n}^{b\star}}  \\[8pt]
\frac{-\Delta S_{n}^{c\star} }{V_{n}^{c\star}  + \Delta V_{n}^{c\star}} 
\end{bmatrix} \right) \\
& = \sum_{e\epsilon E_{o}} \pmb{Z_{e}} \left(\sum_{n\epsilon N_{e}} \pmb{I_{n}} \right), 
\end{split}
\label{eq:6} 
\end{equation}
where, $\pmb{I_{n}} = \begin{bmatrix}
\frac{-\Delta S_{n}^{a\star} }{V_{n}^{a} + \Delta V_{n}^{a\star} } \hspace{0.1cm} \frac{-\Delta S_{n}^{b\star}}{V_{n}^{b\star}  + \Delta V_{n}^{b\star} } \hspace{0.1cm} \frac{-\Delta S_{n}^{c\star}}{V_{n}^{c\star} + \Delta V_{n}^{c\star}}
\end{bmatrix}^{T} $. \\ \\
Let us assume there are $L+M$ actor nodes such that there are $L$ nodes between the source node and observation node $O$ and $M$ nodes between observation node and last actor node of the network as shown in the Fig. 1. The nodes are arranged in such a way that the set $E_{O}\cap E_{A_{1}} $ has minimum elements (edges) and the sets  $E_{O}\cap E_{A_{L+1}} $ to $E_{O}\cap E_{A_{L+M}} $ have same and maximum number of edges. This is mathematically represented as,
\begin{equation} 
\begin{split}
& |E_{O}\cap E_{A_{1}}|\leq |E_{O}\cap E_{A_{2}}| \hdots \leq |E_{O}\cap E_{A_{L}}|\\
& \leq |E_{O}\cap E_{A_{L+1}}|=|E_{O}\cap E_{A_{L+2}}| \hdots =|E_{O}\cap E_{A_{L+M}}|
\end{split} 
\label{eq:7}
\end{equation} 
where $|E_{O}\cap E_{A_{1}}|$ denotes the cardinality of set $E_{O}\cap E_{A_{1}}$. On dividing set $E_{O}$ into $L+1$ subsets as,
\begin{equation} 
\begin{split} E_{O}=
& |E_{O}\cap E_{A_{1}}|\cup|E_{O}\cap (E_{A_{2}}- E_{A_{1}})|\cup \hdots \\
& |E_{O}\cap (E_{A_{L+1}}- E_{A_{L}})| \\
& = \bigcup\limits_{l=1}^{A_{L+1}}E_{O} \cap (E_{A_{l}}- E_{A_{l-1}}) 
\end{split} 
\label{eq:8}
\end{equation}
since $E_{O} \cap (E_{A_{L}}- E_{A_{L-1}})=\phi$ for $A_{L}= A_{L+2}$ or greater. Using this, (\ref{eq:6}) can be be expressed as,
\begin{equation}
\begin{split}
\pmb{\Delta V_{O}}
& =\sum_{l=1}^{L+1} \sum_{e\epsilon E_{O}\cap E_{A_{l}}-E_{O}\cap E_{A_{l-1}}} 
\left(\pmb{Z_{e}} \pmb{I_{n}} \right)\\
& =\sum_{n=A_{1}}^{A_{L}} \left( \sum_{e\epsilon E_{o}\cap E_{n}} 
 \pmb{Z_{e}}\right) \pmb{I_{n}}. 
\end{split}
\label{eq:9} 
\end{equation} 
When power injection/consumption changes at the actor node $n$, current flowing through the edges changes for all edges of the set $E_{n}$. However, voltage drop across the edges between source node and observation node, changes only for edges that belongs to subset $E_{n} \cap E_{o}$. Taking the sum of the impedance across all such edges, reduces (\ref{eq:9}) to the following form:
\begin{equation} 
\begin{split}
\pmb{\Delta V_{O}}=\sum_{n=A_{1}}^{A_{L}}
\pmb{Z_{on}} \pmb{I_{n}} .
\end{split} 
\label{eq:10}
\end{equation}
where $\pmb{Z_{on}}=\sum_{e\epsilon E_{o}\cap E_{n}} \pmb{Z_{e}}$ is the impedance matrix and its elements are computed by the summation of the impedances of shared paths between all actor nodes and observation node from source node. By expanding $\pmb{I_{n}}$ and $\pmb{Z_{on}}$, (\ref{eq:10}) can be decomposed into real and imaginary parts. The real part $\pmb{\Delta V_{OA}^{a,r}}$ for a particular phase (suppose $a$) can be written as:
\normalsize
\begin{equation}
\begin{split}
\pmb{\Delta V_{OA}^{a,r}} = -
\frac{(\Delta P_{A}^{a}R_{OA}^{aa} + \Delta Q_{A }^{a}X_{OA}^{aa})(V_{A}^{a,r}+\Delta V_{A }^{a,r})}{(V_{A}^{a,r}+ \Delta V_{A }^{a,r})^{2}+(V_{A}^{a,i}+ \Delta V_{A }^{a,i})^{2}} + \\ 
\frac{(\Delta P_{A }^{a}X_{OA}^{aa}-\Delta Q_{A}^{a}R_{OA}^{aa})(V_{A }^{a,i}+\Delta V_{A}^{a,i})}{(V_{A}^{a,r}+ \Delta V_{A }^{a,r})^{2}+(V_{A}^{a,i}+ \Delta V_{A }^{a,i})^{2}}- \dots \\[6pt]
\end{split}
\label{eq:10a}
\end{equation}
In a distribution network, the magnitude of voltage change is usually very small, which can be used to approximate the real part of voltage change as:
\begin{equation}
\begin{split}
\pmb{\Delta V_{OA}^{a,r}} \approx -
\frac{(\Delta P_{A}^{a}R_{OA}^{aa}+\Delta Q_{A}^{a}X_{OA}^{aa})(V_{A }^{a,r})}{(V_{A}^{a,r})^{2}+(V_{A}^{a,i})^{2}} + \\ 
\frac{(\Delta P_{A}^{a}X_{OA}^{aa}-\Delta Q_{A}^{a}R_{OA}^{aa})(V_{A}^{a,i} )}{(V_{A}^{a,r})^{2}+(V_{A}^{a,i})^{2}}- \dots \\[6pt]
\end{split}
\label{eq:10b}
\end{equation}
In a similar way, the imaginary part of voltage change can also be approximated. The aggregation of $\Delta P_{A},\Delta Q_{A}$ as $\Delta S_{A}$ and $R_{A}, X_{A}$ as $Z$  gives 
\begin{equation}
\Delta V_{O}^{a} \approx -\sum_{A=A_{1}}^{A_{L}} \left(\frac{\Delta S_{A}^{a\star}Z^{aa}}{V_{A}^{a}} + \frac{\Delta S_{A}^{b\star}Z^{ab}}{V_{A}^{b}}+ \frac{\Delta S_{A}^{c\star}Z^{ac}}{V_{A}^{c}}\right)
\label{eq:11}
\end{equation}
Repeating the same procedure for all the three phases, yields the   voltage change approximation as stated in Corollary 1.
\end{proof}
\subsection{Validation of VSA for multiple actor nodes}
This section verifies the derived analytical approximation of VSA using a modified IEEE 37-node test system. The test system is shown in Fig. \ref{fig:2}, and it is used for the validation of all theoretical approximations proposed in this work. This test network is selected due to its highly unbalanced load and has been used by various researchers in the past for validation \cite{khushalani2007development}. The nominal voltage of the test system is $4.8$ kV with bus $1$ as source. Classical NR method is used as a baseline method for validating our proposed methods.
\added{Along with the IEEE 37-node network, we also employ a larger IEEE 123-node test network for evaluating the proposed method as shown in Fig \ref{fig:2b}. This network is particularly selected due to its highly unbalanced characteristics, consisting of both single and three phase loads. The rated voltage of the test system is $4.16$ kV.}

The accuracy of the VSA approximation for multiple actor node case, is \added{first} evaluated \added{in the 37-node network} by simulating \deleted{the} \added{a} scenario assuming $22,17,14,8,7$ as actor nodes. The power changes at these actor nodes will occur simultaneously at different phases, which is about $50\%$ of their rated load as tabulated in Table \ref{tab:1}. 
\begin{table}
	\centering
	\caption{Power change across different actor nodes}
	\begin{tabular}{ |c|c|c|c|} 
		\hline
		Actor nodes & Phase & Rated power (kVA) & New power (kVA) \\ 
		\hline
		22 & c & 42 + j21 & 63 + j21 \\ 
		\hline
		17 & b & 42 + j21 & 63 + j31 \\ 
		\hline
		14 & c & 84 + j42 & 126 + j21 \\ 
		\hline
		8 & a & 42 + j21 & 63 + j31 \\ 
		\hline
		7 & c & 84 + j42 & 126 + j21 \\ 
		\hline
	\end{tabular}
	\label{tab:1}
\end{table}	
Fig. \ref{fig:3} shows the voltage change at various observation nodes. It can be observed that the errors between the analytical approximation and simulated voltage change are negligible as the maximum deviation is in the range of $10^{-4}$ pu. \added{The absolute value of error average over all the observation nodes is 0.000196 pu, which is very low.} As expected, the magnitude of voltage change increases as the observation node moves away from the source node. This is due to the increase in the length of the shared path between observation node and the actor nodes from the source node. Furthermore, the voltage change remains constant for the range of observation nodes from $25$ to $37$. This is due to the constant length of the shared paths between the actor nodes and these observation nodes, which can be observed from Fig. \ref{fig:2}. \added{To summarize, the accuracy of voltage estimation can be ensured regardless of the relative distance between actor and observation nodes.} An additional inference is that the effect of power variation in the voltage change of phase $a$ seems to be less compared to phase $c$. This is because the majority of selected actor nodes belong to phase $c$ as mentioned in Table \ref{tab:1}. To further quantify the quality of the result in Corollary 1, the error associated with the approximation is analyzed in the next section.  

\begin{figure}[h!]
\centering
	\includegraphics[width = 8cm,height=3.0cm]{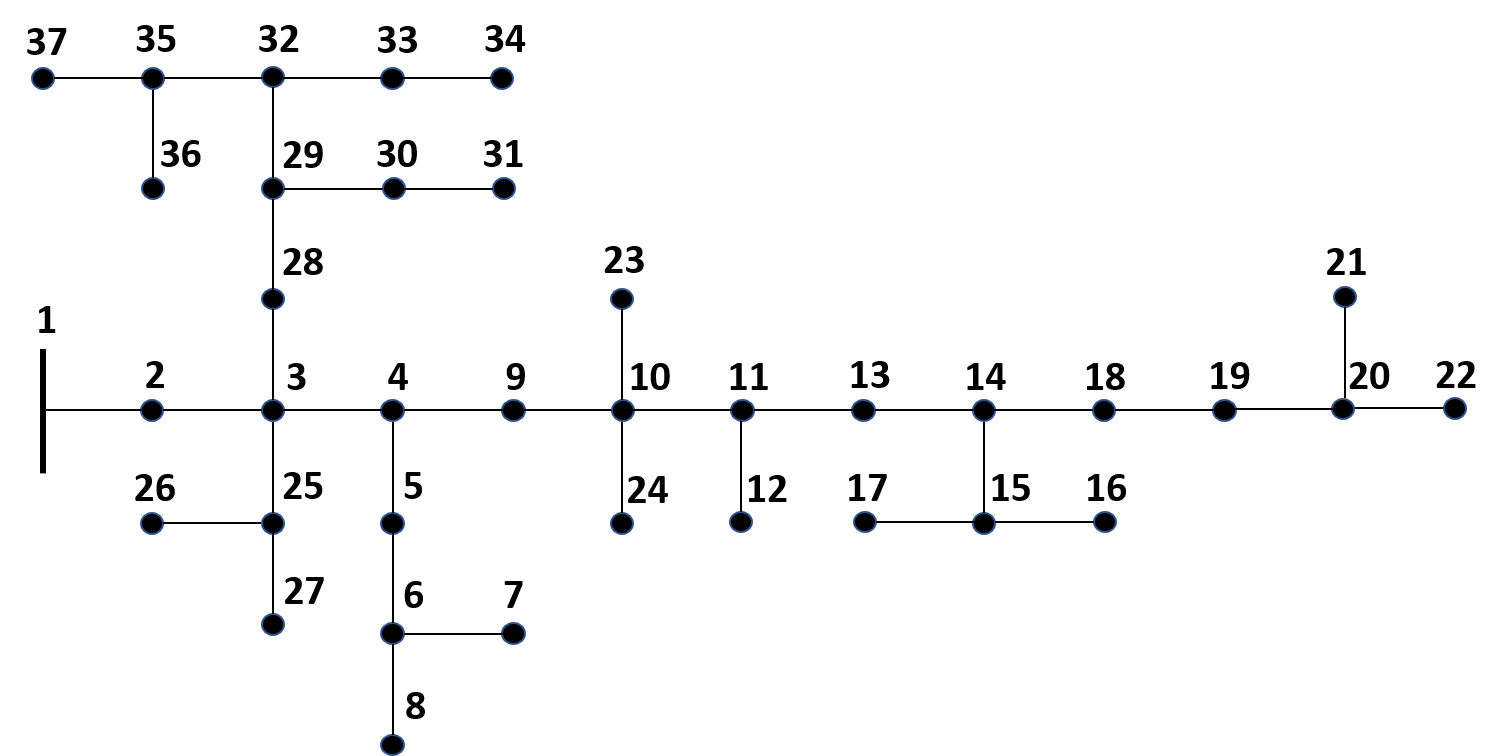}
	\caption{Modified IEEE 37 node network}
	\label{fig:2}
\end{figure}
\begin{figure}[h!]
\centering
	\includegraphics[width = 8.8cm,height=5.5cm]{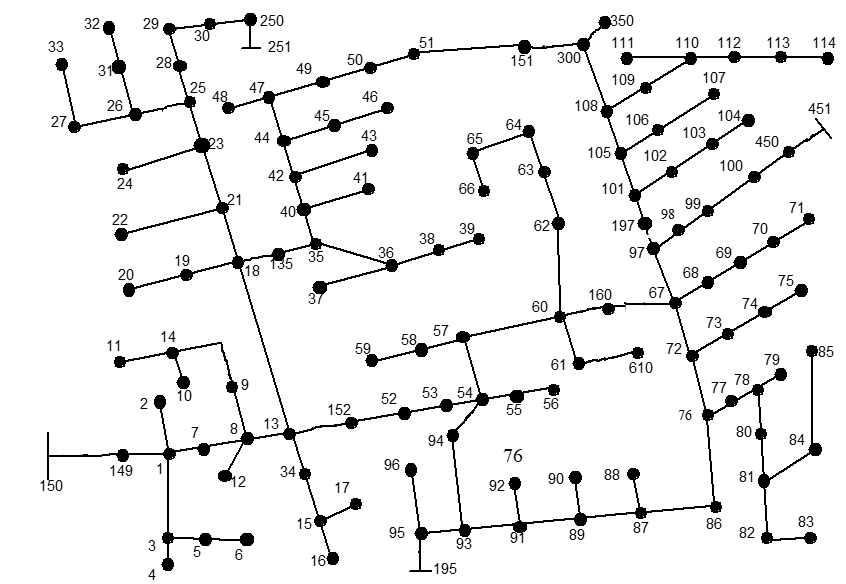}
	\caption{\added{Modified IEEE 123 node network}}
	\label{fig:2b}
\end{figure}
\begin{figure}[h!]
\centering
	\includegraphics[width = 9.0cm,height=5.0cm]{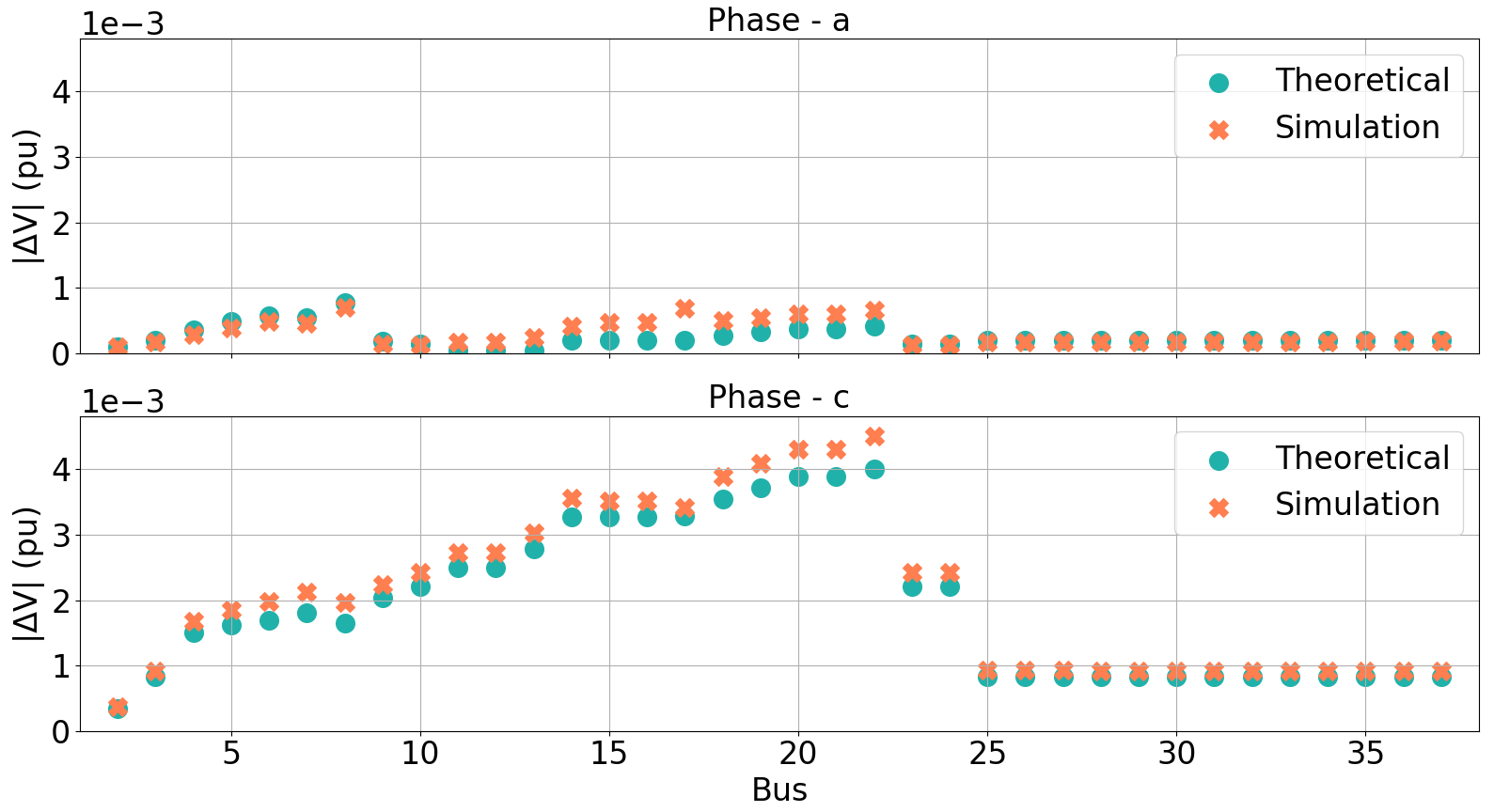}
	\caption{Voltage change on all nodes \added{of 37 node network }due to multiple actor nodes}
	\label{fig:3}
\end{figure}

\begin{figure}[h!]
\centering
	\includegraphics[trim = 0 0 0 0, clip, width = 9.0cm, height=3.0cm]{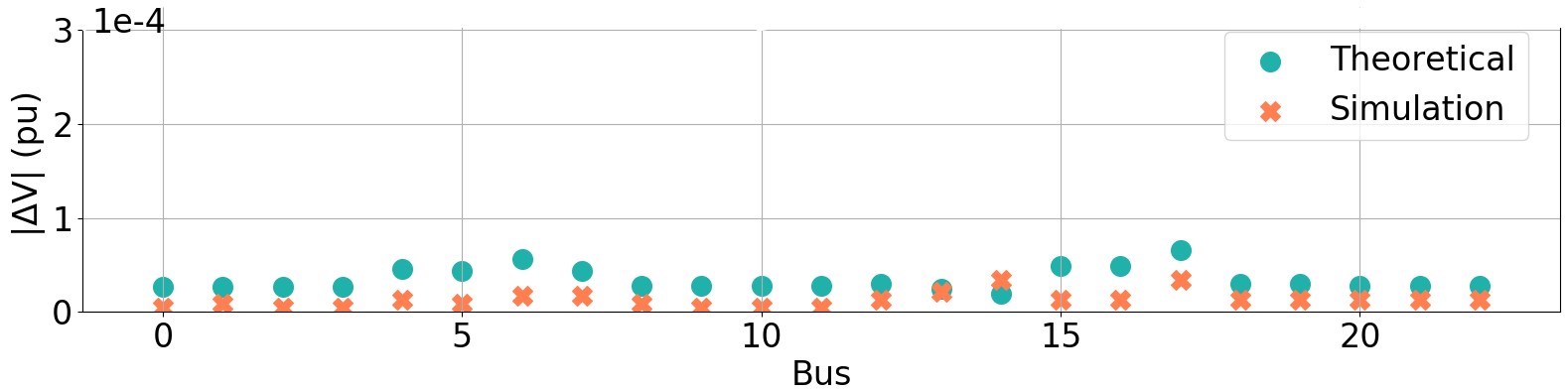}
	\caption{\added{Voltage change at phase a of the selected nodes in 123 node test network due to multiple actor nodes}}
	\label{fig:3b}
\end{figure}

\added{Similarly, the VSA approximation is tested in the 123-node network with $7$ actor nodes, i.e., nodes 7,11,19,28,35,42,68. Like the tests conducted for the IEEE 37-node system, the power changes at these actor nodes occur simultaneously with magnitude equal to 50 \% of their rated load. Fig. \ref{fig:3b} shows the voltage change at various observation nodes. Accurate voltage estimation using the proposed analytical formulation can be observed, as the error is contained within $10^{-4}$ pu.}
\section{Upper bound on approximation error}
As shown in Section \RomanNumeralCaps{2} B, the proposed analytical method approximates the true voltage change for a large range of power variation with very small error magnitude. To further substanticiate the quality of this approximation, Corollary $2$ provides an upper bound for the error. 
\begin{corollary}
	 For an unbalanced power distribution system, the errors in the real ($\Delta V_{e}^{r}$) and imaginary part ($\Delta V_{e}^{i}$) of the voltage change approximation are upper bounded by:
	\begin{equation}
	\begin{split}
	\Delta V_{e}^{r} \leq \sum_{u\epsilon \tilde{U}} \left(\frac{k_{1}^{u}/(1+c_{1}^{u})}{V_{A}^{a,r} } +  \frac{k_{2}^{u}/(1+c_{2}^{u})}{V_{A}^{a,i}}\right) \\
	\Delta V_{e}^{i} \leq  \sum_{u\epsilon \tilde{U}} \left(\frac{k_{2}^{u}/(1+c_{1}^{u})}{V_{A}^{a,r} } +  \frac{k_{1}^{u}/(1+c_{2}^{u})}{V_{A}^{a,i}}\right),	
	\end{split}
	\label{eq:13}
	\end{equation}	
where $k_{1},k_{2},c_{1}, c_{2}$ are parameters dependent on the power change and impedance of the corresponding phases. The set $ \tilde{U}$ contains the self and cross phase terms of the phase where error in voltage change is computed (e.g. it is $aa,ab,ac$ for phase $a$). The voltage change at any phase consist of three components from the three phases. The value of these parameters for phase $a$ are: $k_{1}=\Delta P_{A}^{a} R_{OA}^{aa}-\Delta Q_{A}^{a} X_{OA}^{aa}$, $ k_{2} = \Delta P_{A}^{a}R_{OA}^{aa}-\Delta Q_{A}^{a}X_{OA}^{aa}$, $c_{1}=(V_{A}^{a,i}/V_{A}^{a,r})^2$, $c_{2}=c_{1}^{-1}$. 
\end{corollary}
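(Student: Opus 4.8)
The plan is to identify the approximation error with the collection of terms that are neglected in passing from the exact real-part expression \eqref{eq:10a} to its approximation \eqref{eq:10b}, and then to show that these neglected correction terms are dominated in magnitude by the retained (leading) terms themselves. Concretely, for a fixed phase (say $a$) and a single actor node, \eqref{eq:10a} keeps the correction factors $(V_A^{a,r}+\Delta V_A^{a,r})$ and $(V_A^{a,i}+\Delta V_A^{a,i})$ in both numerator and denominator, whereas \eqref{eq:10b} replaces them by $V_A^{a,r}$ and $V_A^{a,i}$. Writing $\Delta V_e^r$ as the difference between these two rational expressions, the objective becomes an inequality of the form $|T^{\mathrm{exact}}-T^{\mathrm{approx}}|\le |T^{\mathrm{approx}}|$, applied term by term.

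First I would rewrite each retained term of \eqref{eq:10b} in the factored form that already appears in the statement. Using $c_1=(V_A^{a,i}/V_A^{a,r})^2$ one has $\frac{k_1 V_A^{a,r}}{(V_A^{a,r})^2+(V_A^{a,i})^2}=\frac{k_1/(1+c_1)}{V_A^{a,r}}$, and with $c_2=c_1^{-1}$ the companion term becomes $\frac{k_2/(1+c_2)}{V_A^{a,i}}$. This identifies the right-hand side of \eqref{eq:13} with the magnitude of the approximated voltage change, so the claim reduces to showing that the neglected perturbation in each term does not exceed the term it perturbs. I would then form the single-fraction expression for the error, collecting the $\Delta V_A^{a,r}$ and $\Delta V_A^{a,i}$ contributions in the numerator over the (positive) perturbed squared magnitude in the denominator. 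Invoking the standing small-perturbation hypothesis, made quantitative as $|\Delta V_A^{a,\cdot}|$ being a sufficiently small fraction of $|V_A^{a,\cdot}|$, the triangle inequality on the numerator together with monotonicity of the denominator yields the per-term bound $|T^{\mathrm{exact}}-T^{\mathrm{approx}}|\le |T^{\mathrm{approx}}|$. The imaginary-part estimate follows from the symmetric calculation, which merely interchanges the roles of $k_1$ and $k_2$ in the two fractions, exactly as reflected in the second line of \eqref{eq:13}. Summing these per-term bounds over the self and cross-phase contributions $u\in\tilde U$ (the $aa,ab,ac$ terms for phase $a$) and over all actor nodes then reproduces \eqref{eq:13}.

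The hard part will be the central inequality $|T^{\mathrm{exact}}-T^{\mathrm{approx}}|\le |T^{\mathrm{approx}}|$. Because the perturbed denominator $(V_A^{a,r}+\Delta V_A^{a,r})^2+(V_A^{a,i}+\Delta V_A^{a,i})^2$ couples the real and imaginary perturbations, the error does not split cleanly along the $V_A^{a,r}$ and $V_A^{a,i}$ axes, so the cross contributions must be bounded and the sign of $\Delta V$ controlled in the worst case (a negative perturbation shrinks the denominator and is the binding scenario). The small-perturbation assumption is precisely what prevents the relative error from exceeding unity and is therefore what makes the stated bound valid; quantifying how small $|\Delta V|$ must be for the coupled two-dimensional estimate to hold is the delicate step, whereas the algebraic factoring into the $c_1,c_2$ form and the summation over phases and actor nodes are routine.
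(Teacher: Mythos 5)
Your proposal follows essentially the same route as the paper's proof: the error is the difference between the exact expression \eqref{eq:10a} and the approximation \eqref{eq:10b}, each retained term is factored into the form $\frac{k/(1+c)}{V_{A}^{a,\cdot}}$, the neglected correction is shown to be a relative-error factor ($\frac{1}{\tau}-1$ in the paper's notation) bounded by $1$ under the small-perturbation assumption $\Delta V_{A}/V_{A}\ll 1$, and the per-term bounds are summed over $u\in\tilde U$. If anything, you are slightly more careful than the paper, which decouples the perturbed denominator term by term (effectively treating $\tau^{r}$ and $\tau^{i}$ separately) rather than confronting the coupled two-dimensional denominator you flag as the delicate step.
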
 
\normalsize
\begin{proof}
Firstly, let us recall that the analytical approximation derived in Corollary 1 is based on a legitimate assumption that the value of voltage change ($\Delta V_{A}$) can be ignored compared to the rated voltage ($V_{A}$). In other words, the terms containing $\Delta V_{A}$ in equation (\ref{eq:5}) are removed, which leads to the approximation computed in (\ref{eq:6}). Despite the accurate approximation performed by Corollary 1, as demonstrated in Fig. 3, this simplification incurs the inevitable error. In Corollary 2, we will prove that the incurred error is upper bounded, which ensures the stability of the approximation method proposed in Corollary 1. The approximation error in voltage change at any phase consists of three components, corresponding to three phases. The error $\Delta V_{e}^{aa,r}$ in real part of phase $a$ due to phase $a$ component is the difference of actual (equation (\ref{eq:10a})) and approximation (equ. (\ref{eq:10b})) expressed as,
\begin{equation} 
\footnotesize
\begin{split}
\Delta V_{e}^{aa,r}= \frac{(k_{1})(V_{A}^{a,r}+\Delta V_{A}^{a,r})}{(V_{A}^{a,r}+ \Delta V_{A}^{a,r})^{2}+(V_{A}^{a,i}+ \Delta V_{A a}^{a,i})^{2}} -
\frac{(k_{1})(V_{A}^{a,r})}{(V_{A}^{a,r})^{2}+(V_{A}^{a,i})^{2}} \\
+\frac{(k_{2})(V_{A}^{a,i}+\Delta V_{A}^{a,i})}{(V_{A}^{a,r}+ \Delta V_{A}^{a,r})^{2}+(V_{A}^{a,i}+ \Delta V_{A}^{a,i})^{2}} -
\frac{(k_{2})(V_{A}^{a,i})}{(V_{A}^{a,r})^{2}+(V_{A}^{a,i})^{2}}.
\end{split} 
\label{eq:14}
\vspace{-1.2cm}
\end{equation}
Similar components from phase $b$ and $c$ exist, which together with (\ref{eq:14}) contribute to the error in phase $a$. (\ref{eq:14}) can be further simplified as,
\normalsize
\begin{equation}
\begin{split}
\Delta V_{e}^{aa,r} & =
\frac{(k_{1})(\tau^{r})(V_{A}^{a,r})}{(\tau^{r})^{2}(V_{A}^{a,r})^{2}+(\tau^{i})^{2}(V_{A}^{a,i})^{2}} -
\frac{(k_{1})(V_{A}^{a,r})}{(V_{A}^{a,r})^{2}+(V_{A}^{a,i})^{2}} \\
& +\frac{(k_{2})(\tau^{i})(V_{Aa}^{a,i})}{(\tau^{r})^{2}(V_{A}^{a, r})^{2}+(\tau^{i})^{2}(V_{A}^{a,i})^{2}}-\frac{(k_{2})(V_{A}^{a,i})}{(V_{A}^{a, r})^{2}+(V_{A}^{a,i})^{2}} \\
& = \Delta V_{e1}^{aa,r} + \Delta V_{e2}^{aa,r}
\end{split}
\label{eq:15}
\end{equation}
\normalsize
where $\tau^{r}= 1+ \epsilon^{r}, \tau^{i}= 1+ \epsilon^{i}, \epsilon^{r}=(\Delta V_{A}^{a,r}/{V_{A}^{a,r}}), \epsilon^{i}=(\Delta V_{A}^{a,i}/{V_{A}^{a,i}}) $.
Here, (\ref{eq:15}) consist of two similar error components $\Delta V_{e1}^{aa}$ and $\Delta V_{e2}^{aa}$, which is evaluated separately as:
\begin{equation}
\begin{split}
\Delta V_{e1}^{aa,r}=\frac{k_{1}/(1+c_1)}{ V_{A}^{a,r}} \left(\frac{1}{\tau^{r}}-1\right)
\end{split}
\label{eq:16}
\end{equation}
where $c_{1} = (V_{A}^{a,i}/V_{A}^{a,r})^2.$
As the ratio of change in voltage and rated voltage, i.e., $\epsilon^{r}$ and $\epsilon^{i}$ are typically very small, we can argue the following inequality: 
\begin{equation}
\begin{split}
\epsilon^{r} \leq 1 - \epsilon^{r} \implies
\frac{\epsilon^{r}}{1- \epsilon^{r}} \leq 1 \implies
\frac{1}{\tau^{r}}-1 \leq 1 
\end{split}
\label{eq:17}
\end{equation}
Then, using (\ref{eq:17}), equation (\ref{eq:16}) can be bounded as, 
\begin{equation}
\begin{split}
\Delta V_{e1}^{aa,r}  = \frac{k_{1}/(1+c_{1})}{V_{A}^{a,r}} \left(\frac{1}{\tau^{r}}-1\right) \leq \frac{k_{1}/(1+c_{1})}{V_{A}^{a,r}}
\end{split}
\label{eq:18}
\end{equation}
Similarly, with the same arguments, the upper bound can be derived for second part of (\ref{eq:15}) as, 
\begin{equation}
\begin{split}
\Delta V_{e2}^{aa,r}  = \frac{k_{2}/(1+c_{2})}{V_{A}^{a,i}} \left(\frac{1}{\tau^{i}}-1\right) \leq \frac{k_{2}/(1+c_{2})}{V_{A}^{a,i}}
\end{split}
\label{eq:19}
\end{equation}
Equations (\ref{eq:18}) and (\ref{eq:19}) are combined to arrive at the upper bound on the specific component of the voltage change, contributed from phase $a$.
\begin{equation}
\Delta V_{e}^{aa,r} \leq\frac{k_{1}/(1+c_{1})}{V_{A}^{a,r}} +  \frac{k_{2}/(1+c_{2})}{V_{A}^{a,i}}
\label{eq:20}
\end{equation}
The bound on the other parts of voltage change, which are contributed from phase $b$ and $c$, i.e., $\Delta V_{e}^{ab,r}$ and $\Delta V_{e}^{ac,r}$, is similar in form to (\ref{eq:20}) except for the constants ($k_{1},k_{2},c_{1},c_{2}$) which are dependent on the power and shared path impedance of the corresponding phases. Then, the bound from all the phase terms are added to give the aggregate upper bound on the real part of voltage change in phase $a$, as stated in Corollary 2. The same procedure can be applied to derive the bounds for imaginary part of voltage change in phase $a$. Finally, the bound on the error magnitude can be computed from the bound on real and imaginary part of the error.
\end{proof}
\subsection{Validation of error upper bound}
To evaluate the tightness of the error bound proposed in Corollary 2, the same IEEE 37-node network used in Section \RomanNumeralCaps{2} is adopted. A scenario is simulated where the power drawn by phase $c$ of node $22$ is increased by $21$ kW and voltage change is observed across all nodes. The actual error is computed by taking the difference between the voltage change calculated from numerical load flow method and the analytical approximation. Then, the theoretical error bound is computed using the results of Corollary $2$. Fig. \ref{fig:0b} compares the actual error and the error bound. It can be observed that the theoretical bound is always above the actual error, with sufficient tightness. The bound is relatively tight in cross phases, i.e., phase $a$ and $b$, compared to phase $c$ as power varies in phase $c$ of the network. The error magnitude is relatively large when observation nodes are in the neighborhood of actor node. This is because the shared path impedance between the observation node and actor node from the source node is larger for the neighboring nodes, compared to other nodes in the network. 

\begin{figure}[h!]
	\includegraphics[width = 8.5cm,height=5.5cm]{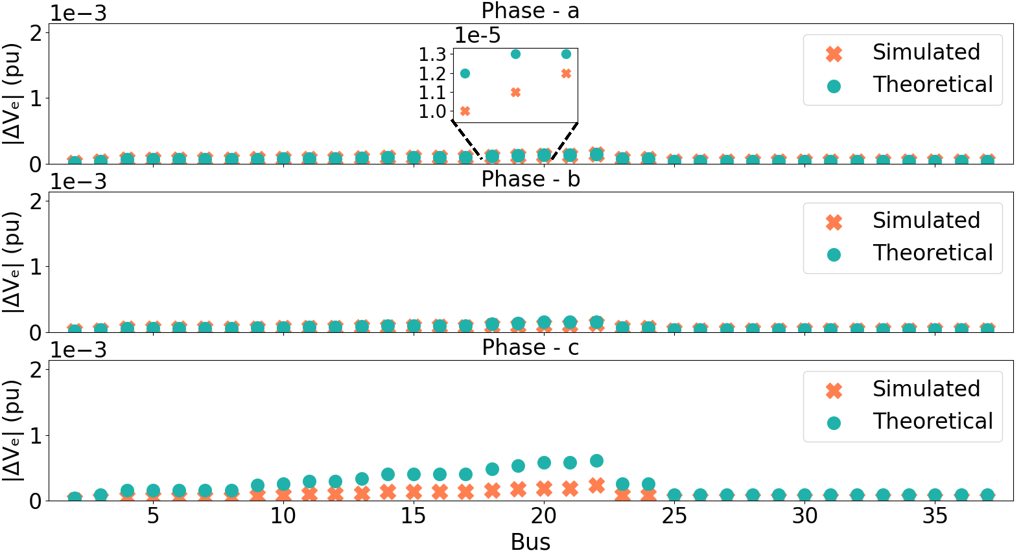}
	\caption{\added{Error bound on the voltage change of all nodes in 37-node network}}
	\label{fig:0b}
\end{figure}
\added{Further, we also check the scalability of the error bound by testing corollary 2 in the larger IEEE 123-node network. Power is increased by 10 kW at phase $a$ of the node 20 and voltage change is monitored for the selected observation nodes. The actual error and theoretical error bound is computed in a similar way as described for the 37-node network, and are shown in Fig. \ref{fig:4b}. Once again the results demonstrate the tightness of error bound. Now, given the error is upper bounded, it}\deleted{The error bound} further increases the credibility of the analytical approximation and allow us to extend the analysis for \added{a} stochastic framework, where the power changes are uncertain. This extension to the probabilistic case is discussed in \added{the} next section.
\begin{figure}[h!]
\centering
	\includegraphics[trim = 0 0 0 0, clip, width = 8.5cm,height=2.5cm]{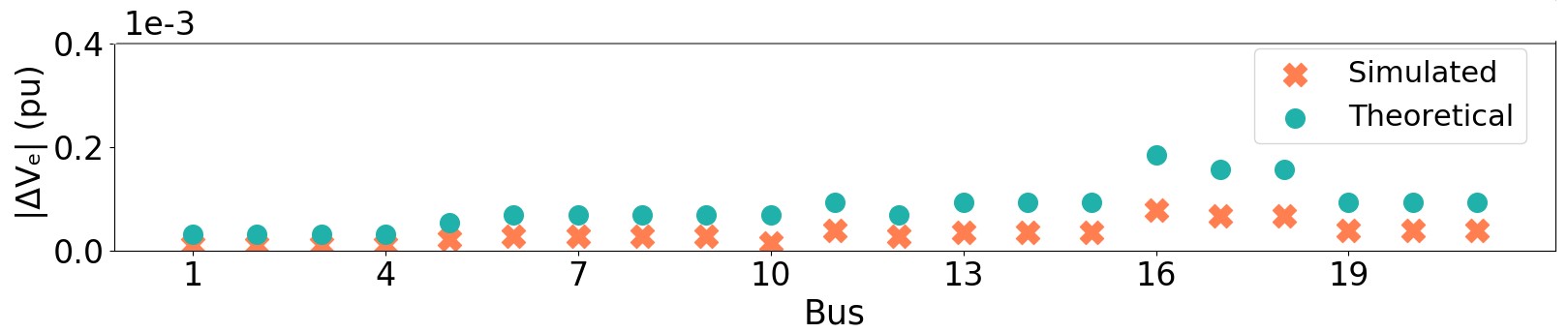}
	\caption{\added{Error bound on the voltage change of phase $a$ in the selected nodes of 123-node network}}
	\label{fig:4b}
\end{figure}

\section{Probabilistic analysis of Voltage sensitivity}
Corollary 1 allows us to compute the voltage change at any observation node from known power changes at various actor nodes. However, in practice, the power could vary randomly due to intermittent characteristics of PV generation. This stochastic variation in turn introduces randomness in the voltage across the network. Under such stochastic scenarios, the grid operator might be interested in predicting the probability of experiencing a voltage violation, i.e., $P(|\Delta V_{O}|> 0.05$ p.u.) so that corrective actions can be taken beforehand. Therefore, it becomes relevant and necessary to derive the probability distribution of the magnitude of voltage change at certain nodes of the distribution grid due to random fluctuations in power at actor nodes. This result is provided by Theorem 2.

\subsection{Computing the probability distribution of $|\Delta V_{O}|$} 
\begin{theorem}
	For an unbalanced radial power distribution system, the probability distribution of voltage change at an observation node $(\Delta V_{O})$ due to random changes in power consumption/injection of actor nodes, corresponds to Nakagami distribution
	\begin{equation}	
	|\Delta V_{O}| \sim \text{Nakagami}(m, \omega)
	\label{eq:24} 
	\end{equation}
	where, shape parameter $m=(\sigma_{r}^2 + \sigma_{i}^2)/\theta$ and
	scale parameter $\omega = \sqrt{\sigma_{r}^2 + \sigma_{i}^2}$. Here, 
	$\theta = 2(\sigma_{r}^4 + \sigma_{i}^4+ 2c^2)/(\sigma_{r}^2 + \sigma_{i}^2) $, $\sigma_{r}^2= C_{R}^T \textstyle\sum_{\Delta S} C_{R}$, $\sigma_{i}^2= C_{I}^T \textstyle\sum_{\Delta S} C_{I}$ and $c$ is the covariance between the real and imaginary part of voltage change. $C_{R}$ and $C_{I}$ are dependent on the shared path impedances and base voltages of the actor nodes, and $\textstyle\sum_{\Delta S}$ is the covariance matrix of complex power change across different actor nodes.
\end{theorem}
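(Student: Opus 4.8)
The plan is to reduce $|\Delta V_O|$ to the modulus of a correlated bivariate Gaussian and then identify its (approximate) law by moment matching. Starting from the approximation in Corollary 1 for the phase of interest, I would expand each term, a product of a conjugated power change $\Delta S_A^{a\star}=\Delta P_A^a - j\Delta Q_A^a$ with the deterministic complex constants $Z_{OA}$ and $1/V_A^{a\star}$, into its real and imaginary parts. Because the impedances and base voltages are fixed, both the real part $\Delta V_O^r$ and the imaginary part $\Delta V_O^i$ emerge as real linear combinations of the stacked random power-change vector $\Delta S$ (collecting $\Delta P_A$ and $\Delta Q_A$ over all actor nodes and phases). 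This yields $\Delta V_O^r = C_R^T \Delta S$ and $\Delta V_O^i = C_I^T \Delta S$, in which the coefficient vectors $C_R$ and $C_I$ are precisely the functions of shared-path impedances and base voltages named in the statement.

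Next I would establish joint Gaussianity of $(\Delta V_O^r, \Delta V_O^i)$. Since each is a linear functional of many random power changes across the actor nodes, a central-limit argument (or, if one assumes Gaussian power fluctuations outright, exactness of the linear map) yields that $(\Delta V_O^r, \Delta V_O^i)$ is jointly zero-mean Gaussian with $\sigma_r^2 = C_R^T \Sigma_{\Delta S} C_R$, $\sigma_i^2 = C_I^T \Sigma_{\Delta S} C_I$, and cross-covariance $c = C_R^T \Sigma_{\Delta S} C_I$. Strictly, the modulus of such a correlated bivariate normal obeys the Hoyt (Nakagami-$q$) law; the assertion is that a Nakagami-$m$ law with moment-matched parameters is the operative approximation, so I would proceed by matching the first two moments of $|\Delta V_O|^2$.

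The moment computation is the technical core. Writing $R^2 = (\Delta V_O^r)^2 + (\Delta V_O^i)^2$, linearity gives $E[R^2] = \sigma_r^2 + \sigma_i^2$. For the variance I would invoke Isserlis' (Wick's) theorem for zero-mean Gaussians, namely $E[(\Delta V_O^r)^4] = 3\sigma_r^4$, $E[(\Delta V_O^i)^4] = 3\sigma_i^4$, and crucially $E[(\Delta V_O^r)^2(\Delta V_O^i)^2] = \sigma_r^2\sigma_i^2 + 2c^2$. Collecting terms gives $\mathrm{Var}(R^2) = 2(\sigma_r^4 + \sigma_i^4 + 2c^2)$, which is exactly the quantity $\theta\,(\sigma_r^2+\sigma_i^2)$ appearing in the statement.

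Finally, I would match to the Nakagami-$m$ distribution, whose spread parameter equals the second moment $\Omega = E[R^2]$ and whose shape satisfies $m = \Omega^2/\mathrm{Var}(R^2)$. Substituting the moments above gives $m = (\sigma_r^2+\sigma_i^2)^2/\bigl(2(\sigma_r^4+\sigma_i^4+2c^2)\bigr) = (\sigma_r^2+\sigma_i^2)/\theta$, while the scale parameter, written here as $\omega = \sqrt{\Omega} = \sqrt{\sigma_r^2+\sigma_i^2}$, follows immediately, as claimed. The main obstacle I anticipate is not the algebra but the two approximation steps: justifying the passage to joint Gaussianity (through the number and near-independence of actor-node fluctuations, or via an explicit Gaussian modeling assumption) and justifying the replacement of the exact Hoyt modulus law by a moment-matched Nakagami law. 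The fourth-moment cross term $2c^2$ from Isserlis' theorem is the single place where a careless calculation would corrupt the shape parameter, so I would take particular care there.
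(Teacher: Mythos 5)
Your proposal follows essentially the same route as the paper: linearize the voltage change as $C_R^T\Delta S$ and $C_I^T\Delta S$, invoke a central limit argument (the paper uses Lindeberg--Feller) to get a correlated zero-mean bivariate Gaussian, and then moment-match $|\Delta V_O|^2$ to a gamma law --- the paper phrases this step as ``the sum of correlated gamma variables is also a gamma'' with exactly your $2c^2$ cross-covariance --- before taking the square root to obtain the Nakagami parameters $m=k$, $\omega=\sqrt{k\theta}$. Your explicit Isserlis computation of $\mathrm{Var}(R^2)=2(\sigma_r^4+\sigma_i^4+2c^2)$ and your remark that the exact modulus law is Hoyt rather than Nakagami-$m$ make the approximation step more transparent than the paper's citation-based treatment, but the argument is the same.
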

\begin{proof}
The change in complex voltage at any observation node due to change in complex power injection/consumption of an actor node can be expressed in terms of real and imaginary components as,

\hspace{2.0cm}$\Delta V_{OA}=\Delta V_{OA}^{r} + j\Delta V_{OA}^{i}$, \\
where, the real part ($\Delta V_{OA}^{a,r}$) and imaginary part ($\Delta V_{OA}^{a,i}$) of voltage change at any phase (here, we use phase $a$ as an example that can be applied to other phases also) of observation node $O$ can be written as	
\begin{equation*}
\begin{split}
\Delta V_{OA}^{a,r} = \sum_{h,u}^{}\frac{-1}{|V_{A}^{h}|}[\Delta P_{A}^{h}(R_{OA}^{u}cos(\omega_{A}) - X_{OA}^{u}sin (\omega_{A})) \\ 
+ \Delta Q_{A}^{h}(R_{OA}^{u}sin(\omega_{A}) + X_{OA}^{u} cos(\omega_{A})) ] \\
\Delta V_{OA}^{a,i} = \sum_{h,u}^{}\frac{-1}{|V_{A}^{h}|}[
\Delta P_{A}^{h}(R_{OA}^{u}sin(\omega_{A}) + X_{OA}^{u} cos(\omega_{A})) + \\
\Delta Q_{A}^{h}(X_{OA}^{u}sin (\omega_{A})-R_{OA}^{u}cos(\omega_{A}))]
\end{split}
\label{eq:25}
\end{equation*}
where $h$ $\epsilon$ $\tilde{H}$ and $u$ $\epsilon$ $\tilde{U}$. The sets $\tilde{H}$ and  $\tilde{U}$ denote different phases, i.e., $a,b,c$, and different phase sequence, i.e., $aa,ab,ac$, respectively. $\Delta P_{A}^{h}$ and $ \Delta Q_{A}^{h}$ are the active and reactive power changes, respectively. $R_{OA}^{h}, X_{OA}^{h}$ are the resistance and reactance of shared path between the observation node $O$ and actor node $A $ from the source node. $V_{A}^{h} $ denotes the base voltage of actor node $A$. 

Using the superposition result of Corollary $1$, the net voltage change at an observation node due to aggregate effect of multiple spatially distributed actor nodes can be written as the sum of changes in voltage at the observation node due to every single actor node as,  
\begin{equation}
\begin{split}
\Delta V_{O}^{a}= \sum_{A} \Delta V_{O A }^{a, r} + \sum_{A} \Delta V_{O A }^{a, i}\\
\end{split}
\label{eq:26}
\end{equation}
Intermittent characteristics of PV injection introduces randomness in the power variation. Here, node power change ($\Delta S$) is modeled as zero mean random vector with covariance matrix $\textstyle\sum_{\Delta S}$. As shown in (\ref{eq:26a}), the notation ($\Delta S$) is a compact vector representing the power change of phases $a$ ($\Delta s^{a}$), $b$ ($\Delta s^{b}$), and  $c$ ($\Delta s^{c}$). In addition, the vector representing power changes of a phase, phase $a$ for e.g., is composed by active and reactive power changes for the corresponding phase of all the nodes.
\begin{equation}
\begin{split}
\Delta S &= [\Delta s^{a} \hspace{0.1cm} \Delta s^{b} \hspace{0.1cm} \Delta s^{c}]^T \\
\Delta s^{a} & = [
\Delta P_{1}^{a} \hdots \Delta P_{n}^{a} \hspace{0.2cm}\Delta Q_{1}^{a} \hdots \Delta Q_{n}^{a}]^T
\end{split}
\label{eq:26a}
\end{equation} 
The distribution of $|\Delta V_{O}|$ can be computed as following:
\vspace{0.2cm}\\
$1$. \textit{Define covariance matrix  $\textstyle\sum_{\Delta S}$}:\\
The covariance matrix $\textstyle\sum_{\Delta S}$ of the complex power change is used to quantify the correlation of power changes among various nodes due to geographical proximity. For nodes that do not have PVs, the variance can be set to zero or equal to the nominal load fluctuation variance. In practice, the covariance structure can be learned using historical data.
\vspace{0.2cm}\\
$2.$ \textit{Compute constant vectors $C_{R}$ and $C_{I}$}: \\
  In this work, the network topology with meta parameters is assumed to be known. Let us define two vectors $C_{R}$ and  $C_{I}$ which can be computed using the following equation.

\begin{minipage}{.05\textwidth}
\begin{align*}	
C_{R} = \begin{bmatrix}
	c_{r}^{aa} \\ c_{r}^{ab} \\ c_{r}^{ac}
\end{bmatrix} \\[4pt] 
C_{I} = \begin{bmatrix}
c_{r}^{aa} \\ c_{r}^{ab} \\ c_{r}^{ac}
\end{bmatrix} 
\end{align*}
\end{minipage}
\begin{minipage}{.35\textwidth}
\begin{equation}
c_{r}^{u} = \begin{bmatrix}
	\frac{-(R_{O 1}^{u}cos(\omega_{1}) - X_{O 1}^{u}sin (\omega_{1}))}{|V_{1}^{a}|} \\[1pt]
	\vdots \\[2pt]
	\frac{-(R_{O n}^{u}cos(\omega_{n}) - X_{O n}^{u}sin (\omega_{n}))}{|V_{n}^{a}|} \\[4pt]
	\frac{-(R_{O 1}^{u}sin(\omega_{1}) + X_{O 1}^{u}cos(\omega_{1}))}{|V_{1}^{a}|} \\[1pt]
	\vdots \\[2pt]
	\frac{-(R_{O n}^{u}sin(\omega_{n}) + X_{O n}^{u}cos (\omega_{n}))}{|V_{n}^{a}|} 
\end{bmatrix}
\end{equation}	
\end{minipage}
where $c_{r}^{u}$ is the constant matrix for a given set of actor nodes and $u$ denotes the self or mutual impedance of the phase $a$ line, i.e., $aa,ab,ac$. Similar matrix exist for $c_{i}^{u}$ but with different values and is omitted for brevity. The compact vectors ($C_{R}^{T}$) and ($C_{I}^{T}$) consist of three components corresponding to three phases. Each such component is composed of ratios between the shared path impedance and rated voltage of the corresponding phase for all the nodes.
\vspace{0.2cm}\\
$3.$  \textit{Compute distribution of $\Delta V_{O}^{r}$ and  $\Delta V_{O}^{i}$}:\\
Voltage change at an observation node due to multiple actor nodes can be expressed as the weighted sum of elements of vector $\Delta S$ as shown by equations (\ref{eq:28}, \ref{eq:29}). Weights are given from the elements of $C_{R}^{T}$ and $C_{I}^{T}$, which represent the ratio of shared path impedance and base voltage of the various actor nodes.  Invoking the Lindeberg-Feller central limit theorem, it can be shown that the weighted sum of the element of $\Delta S$ converges in distribution to a Gaussian random variable. That is, the distribution of $\Delta V_{O}^{r}$ and  $\Delta V_{O}^{i}$ can be expressed as, 
\begin{equation}
\begin{split}
\Delta V_{O}^{a, r} = \sum_{A} \Delta V_{O A }^{a, r}=C_{R}^{T}\Delta S  \overset{D}{\sim} \mathcal{N} (0, C_{R}^T \textstyle\sum_{\Delta S} C_{R} )
\end{split}
\label{eq:28}
\end{equation}
\begin{equation}
\begin{split}
\Delta V_{O}^{a, i}=   \sum_{A} \Delta V_{O A }^{a, i}= C_{I}^{T}\Delta S \overset{D}{\sim} \mathcal{N} (0, C_{I}^T \textstyle\sum_{\Delta S} C_{I} )
\end{split}
\label{eq:29}
\end{equation}
where variance $\sigma_{r}^{2}$ and $\sigma_{i}^{2}$ of $\Delta V_{O}^{a, r}$ and $\Delta V_{O}^{a, i}$ are $C_{R}^T \textstyle\sum_{\Delta S} C_{R}$ and $C_{I}^T \textstyle\sum_{\Delta S} C_{I}$, respectively.
\vspace{0.2cm}\\
$4.$ \textit{Compute distribution of $|\Delta V_{O}|$}: \\
After obtaining the voltage change in terms of the real part $\Delta V_{O}^{r}$ and imaginary part  $\Delta V_{O}^{i}$, the magnitude of voltage change can be written as
\begin{equation}
\begin{split}
|\Delta V_{O}|^{2} = (\Delta V_{O}^{a, r})^2 +  (\Delta V_{O}^{a, i})^2 
\end{split}
\label{eq:30}
\end{equation}
Square of Gaussian random variables follows a gamma distribution as $(\Delta V_{o}^{a,r})^2 \sim \Gamma (0.5, 2\sigma_{r}^{2} )$, $(\Delta V_{o}^{a,i})^2 \sim \Gamma (0.5, 2\sigma_{i}^{2} )$ \cite{lancaster2005chi}.
The shape parameter is $0.5$ and scale parameter is twice the variance of $\Delta V_{O}^{a, r}$, $\Delta V_{O}^{a, i}$ for $(\Delta V_{o}^{a,r})^2$, $(\Delta V_{o}^{a,i})^2$, respectively.
The real and imaginary part of voltage change is correlated with $c=C_R^T\textstyle\sum_{\Delta S} C_I$ as covariance. Then, the covariance between the square terms, i.e., $(\Delta V_{o}^{a,r})^2$ and $(\Delta V_{o}^{a,i})^2$ is $2c^2$. Since, the sum of the correlated gamma variable is also a gamma \cite{chuang2012approximated}, the sum of $(\Delta V_{o}^{a,r})^2$ and $(\Delta V_{o}^{a,i})^2$ follows a Gamma distribution
\begin{equation}
\begin{split}
|\Delta V_{O}|^{2} = |\Delta V_{O}^{a, r}|^{2} +  |\Delta V_{O}^{a, i}|^{2} \sim \Gamma (k, \theta ) 
\end{split} 
\label{eq:32}  
\end{equation}  
where scale parameter $\theta = 2(\sigma_{r}^4 + \sigma_{i}^4+2c^2) /(\sigma_{r}^2 + \sigma_{i}^2) $ and shape parameter $k=(\sigma_{r}^2 + \sigma_{i}^2) /\theta$. The square root of $|\Delta V_{O}|^{2}$ which is a random gamma variable, follows a Nakagami distribution \cite{nakagami1960m}, and therefore the voltage change magnitude will have the following distribution,
\begin{equation}
|\Delta V_{O}| \sim Nakagami(m, \omega),
\label{eq:33}
\end{equation}
where shape parameter $m=k$, scale parameter $\omega=\sqrt{k\theta}$.
\end{proof}
\vspace{-0.2cm}
Theorem 2 is useful in many ways. Using the equation (\ref{eq:33}), the vulnerability of certain observation nodes in terms of voltage violation can be identified quantitatively and efficiently. \added{ Furthermore, one can also leverage the probabilistic framework to find dominant nodes, that have maximum influence on the voltage sensitivity of critical nodes such as hospitals, schools, etc,. Later, the power at these dominant nodes can be controlled to mitigate voltage violations at the critical nodes \cite{jhala2019data}}. Specifically, the vulnerability of nodes in terms of voltage violations can be evaluated by using the probability of the voltage change exceeding a certain threshold ($|\Delta V|>0.05$ p.u.). Further, the proposed PVSA method can be applied to distribution systems with on-load tap changing transformers and voltage regulators with little modifications.
Voltage change/sensitivity needs to be recomputed whenever the tap settings of the transformer changes because the change in substation voltage changes the voltage of all the nodes in the distribution network.
\added{For regulators,}\deleted{In the proposed framework} we can group all the downstream nodes connected to voltage regulator as an independent network and then perform PVSA for the smaller network. For the upstream distribution system, the voltage regulator can be considered as a single node representing a cumulative load of downstream network.
\vspace{-0.2cm}
\subsection{Validation of PVSA for three phase system}
To evaluate the performance of the proposed theoretical approach, we present two case studies using the same IEEE 37\added{-node test} \deleted{bus} system \added{and IEEE 123 node test network }as shown in Fig. \ref{fig:2} and Fig. \ref{fig:2b}, respectively.
\added{In the first case, power is varied randomly on all odd numbered nodes, following Gaussian distribution with zero mean. The assumption of Gaussian distribution is considered as a common assumption
applied in many prior works \cite{hassanzadeh2010practical,vasilj2015pv, jhala2019data}. The covariance matrix $\textstyle\sum_{\Delta S}$ is constructed based on the correlation of power changes on various actor nodes due to their geographical proximities. Note that the proposed approach is quite general and can be applied
to PV generation scenarios with different probability distributions.}
\deleted{In the first case, power is varied randomly on all odd numbered nodes. Due to their geographical proximity, the power changes of various actor nodes may be correlated.}The underlying covariance structure $\textstyle\sum_{\Delta S}$ can be learned from historical or irradiance related data and it's elements are set \added{realistically based on real PV data, and the base loads on the test network are the same as reported in IEEE PES Distribution system analysis subcommittee report.} \deleted{in this work as follows:} For nodes with PV's, the variance of change in real power and reactive power for any phase are set to $50$ kW and $40$ kVar, respectively. The variance of $\Delta P$ and $\Delta Q$ is set to zero for all non actor nodes. 
The off diagonal elements of covariance matrix captures the covariance between different actor nodes, where correlation coefficient between $\Delta P$'s for different actor nodes within the same phase is set to $0.6$ and for $\Delta Q$'s, it is $0.5$. Here, covariance between cross phase terms is assumed to be zero but the proposed approach is quite general to accommodate other covariance structures as well. The correlation coefficient between $\Delta P$'s and $\Delta Q$'s within the same phase is set as $-0.2$. For illustration purpose, the variance of all actor nodes is set to same value, but the values can vary with the nodes depending upon the size and location of PVs. 

The probability distribution of voltage change at node $9$ using two approaches, i.e. the proposed analytical approximation method and the traditional Newton-Raphson based VSA method are plotted in Fig. \ref{fig:5}. For computing the actual distribution of the magnitude of voltage change, a scenario is generated where power is varied randomly on all actor nodes using the above described covariance structure. Then, a change in voltage is computed using NR based sensitivity analysis method. The complete process, i.e., scenario generation and load flow execution is repeated a million times to plot the histogram shown in Fig. \ref{fig:5}. On the other hand, for computing theoretical distribution, the value of vectors $C_{R}$ and $C_{I}$ are calculated using the network parameters. Then, the variance of real ($\Delta V_{O}^{a, r}$) and imaginary ($\Delta V_{O}^{a, i}$) part of voltage change, i.e., $\sigma_{r}^2$ and $\sigma_{i}^2$ are computed by plugging the above defined covariance matrix in equations (\ref{eq:28}, \ref{eq:29}). Finally, the shape and scale parameter of voltage change magnitude, which is a Nakagami distribution, can be directly computed using equation (\ref{eq:33}).
Fig. \ref{fig:5} shows the sufficiently high accuracy of the proposed method \added{particularly the tail probabilities which is our region of interest.} \deleted{, as the} The Jensen-Shannon distance between the actual and theoretical distribution \added{is} $0.07$ \added{where} \deleted{with} $0$ \added{represents} \deleted{representing} identical distribution and $1$ \added{denotes} \deleted{representing} maximally different cases\cite{endres2003new}. \added{The Jensen-Shannon distance average over all the nodes of the network is $0.06$.}
\begin{figure}[h!]
\centering
	\includegraphics[width = 7.0cm,height=4.0cm]{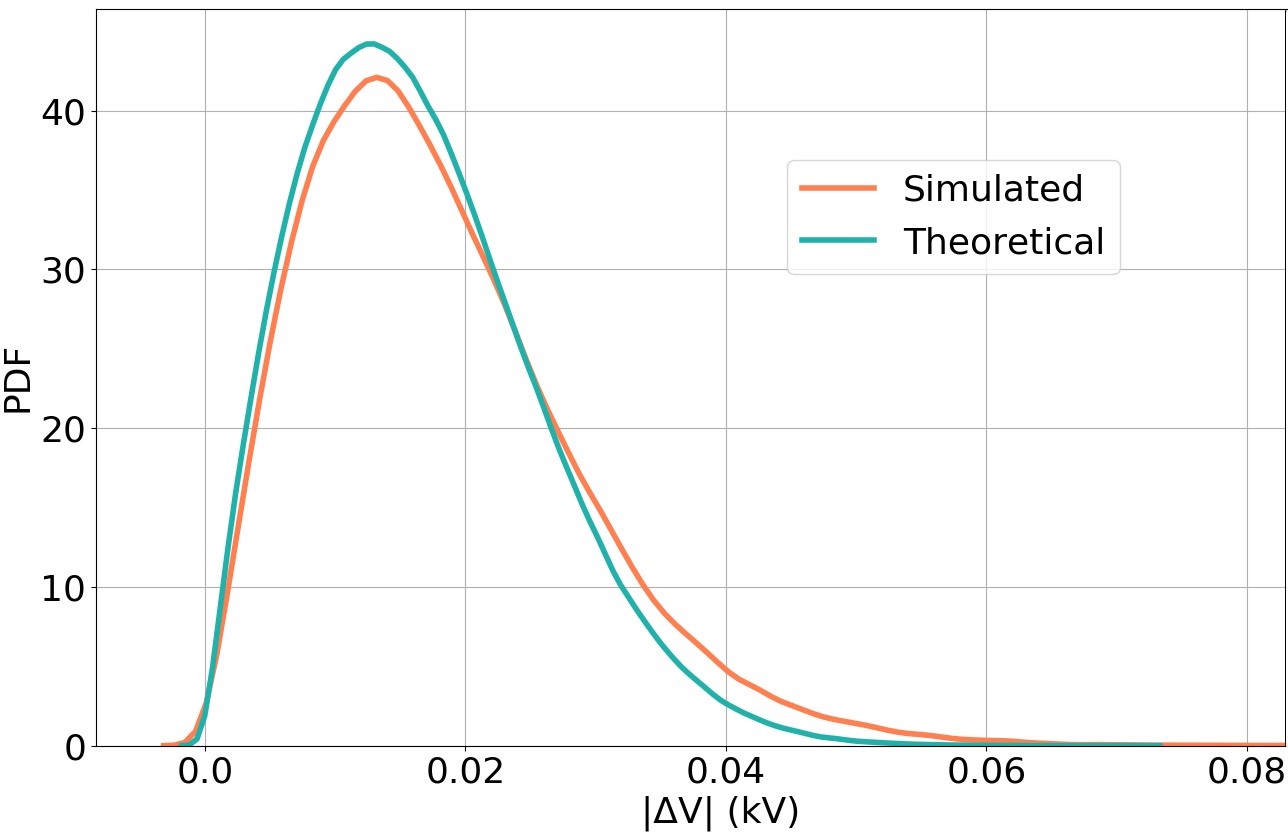}
	\caption{Distribution of magnitude of voltage change at \added{node 9 of the 37 node test network} }
	\label{fig:5}
\end{figure}

\added{Furthermore, the PVSA formulation considering randomness is implemented on the modified IEEE 123-node test system for deriving the distribution of the magnitude of voltage change at node 10, assuming nodes 7, 11, 19, 28, 35, 52, 68 as actor nodes. The covariance matrix is developed in an identical way to a $37$-node network, with the same parameters as discussed in the above paragraph. Fig. \ref{fig:6} depicts the distribution of the magnitude of voltage change computed using the proposed analytical method and the load flow calculation based numerical approach. High accuracy can be witnessed, as the Jensen-Shannon distance between the resulting PDFs of two approaches is $0.18$. This result demonstrates the scalability of the proposed method and its efficacy in conducting VSA for a larger heterogeneous network that includes both single and three phase loads.} 
\begin{figure}[h!]
\centering
	\includegraphics[width = 7.1cm, height=4.0cm]{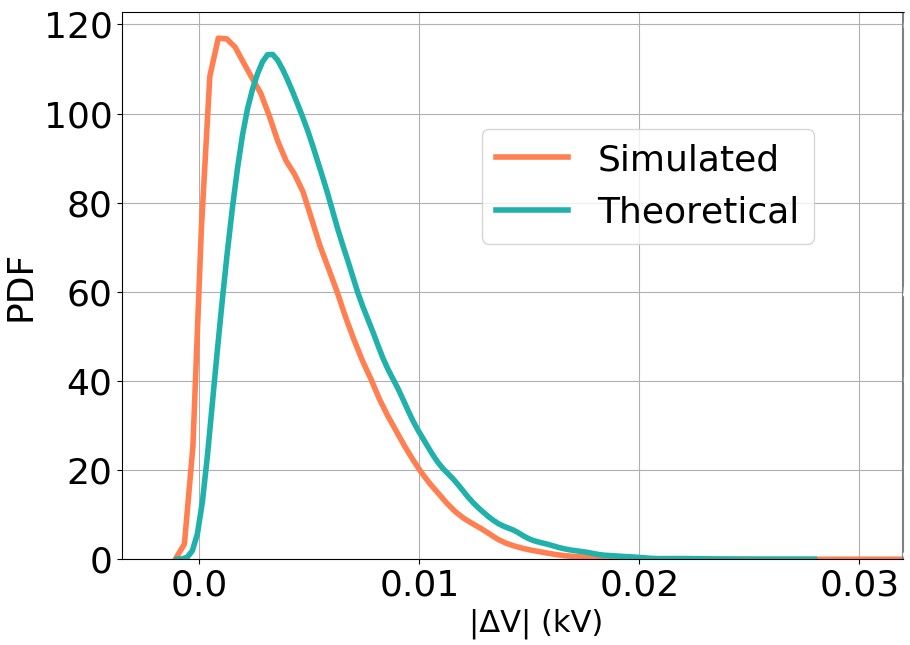}
	\caption{\added{Distribution of magnitude of voltage change at node 10 of the 123-node test network }}
	\label{fig:6}
\end{figure}

\added{The merit of the proposed PVSA method in terms of computational time reduction is demonstrated by comparing the execution time of the simulated and theoretical approaches. All the experiments are conducted on a system with Intel i7 processor running at 2.2 GHz.} The complexity of the proposed
analytical method is of the order $O(1)$, because the calculation
of voltage change in Theorem $2$ does not scale with the size of
the network $(n)$. While, the complexity in NR method is of order $O(n^3)$, as it involves the inversion
operation of the Jacobian matrix. 
\added{Further to compare the execution time for computing voltage change distribution, Monte-Carlo simulations are incorporated to capture the uncertainties associated with the power changes. Here, Monte-Carlo simulations are run for one million times.} Specifically, the execution
time of our method to calculate the voltage change distribution \added{in the 37-node network due to random power changes is within} $1$ minute, compared to $2.2$ hours in classical load flow method.\deleted{with intel i7 processor} \added{On the other hand, the execution time for computing voltage change distribution in the IEEE 123-node system is also within 1 minute, whereas 2.5 hours are needed in the conventional load flow approach. Table \ref{Table:Computationtime} summarizes the time consumption for different cases using the two approaches, respectively. Significant computational time saving can be witnessed with the proposed analytical approach. In particular, the gap is larger in the case of voltage change distribution that includes generator uncertainty. This further highlights the merits of the proposed analytical method, especially in distribution networks with high uncertainty.} \deleted{ Thus, the proposed method is nearly hundred times faster and this factor will increase significantly as the network size increases.}\added{Moreover, the computational time saving offered by the proposed method increases with the size of the distribution network. }                      
\begin{table}[h!]
    \centering
    \caption{\added{Computation time for various case studies}}
    \label{Table:Computationtime}
	\begin{tabular}{|p{28mm}|p{18mm}|p{18mm}|c|}
	\hline
    \backslashbox{Description}{Method} & \textbf{Proposed \newline approach (s)} & \textbf{Load flow \newline approach (s)}  \\
    \hline
    Single observation node on 37 node network & 0.05 & 0.63 \\
    \hline
    Single observation node on 123 node network & 0.09 & 1.57 \\
    \hline
    Distribution for an observation node on 37 node network & 9 & 7920 (1 Million MCS) \\
    \hline
    Distribution for an observation node on 123 node network & 12 & 9200 (1 Million MCS) \\
    \hline
  \end{tabular}
\end{table}


\vspace{-0.1cm}
\section{Conclusion}
This work proposes an analytical approximation of voltage change at any node of the distribution network due to changes in complex power at different actor nodes across a three phase unbalanced distribution network. The approximation error is shown to be tightly upper bounded, illustrating the fidelity of our approach. We also derive the distribution of magnitude of voltage change due to random change in power at actor nodes and show that it can be approximated by a Nakagami distribution. All theoretical results presented in this work are validated with the classical Newton-Raphson load flow method in a modified IEEE 37-node test system \added{and the IEEE 123-node network}. The proposed method can be useful for grid operation and planning as it efficiently allows us to compute the probability of voltage violation at any node in the network. As part of our future work, we will \added{study the impact of different load types on the analytical approximation and} use the probabilistic approach to identify the dominant nodes in the network which have maximum influence on the voltage fluctuation of critical nodes. This information can be useful to develop proactive control strategies for voltage regulation. 

\section*{Acknowledgment}
This material is based upon work partly supported by the Department of Energy, Office of Energy Efficiency and Renewable Energy (EERE), Solar Energy Technologies Office, under Award \# DE-EE0008767 and National science foundation under award \# 1855216.




\bibliographystyle{IEEEtran}
\bibliography{IEEEabrv, PVSA_threeph}
\end{document}